\documentclass[journal]{IEEEtran}
\usepackage{amssymb,stmaryrd,amsmath,amsfonts,rotating}
\usepackage[noadjust]{cite} \usepackage{color} 
\usepackage[vflt]{floatflt} \usepackage{epic}
\usepackage{color}

\newcommand{\x}{\underline{x}}

\RequirePackage{bbm} 
\definecolor{TODO}{rgb}{0.6,0.6,0.6} 

\definecolor{TOCHECK}{rgb}{0.8,0.8,0.8} 


\newtheorem{theorem}{Theorem}
\newcommand{\btheo}{\begin{theorem}}
\newcommand{\etheo}{\end{theorem}}
\newcommand{\bproof}{\begin{proof}}
\newcommand{\eproof}{\end{proof}}
\newtheorem{definition}[theorem]{Definition}
\newcommand{\bdefi}{\begin{definition}}
\newcommand{\edefi}{\end{definition}}
\newtheorem{fact}[theorem]{Fact}
\newcommand{\bprop}{\begin{fact}}
\newcommand{\eprop}{\end{fact}}
\newtheorem{corollary}[theorem]{Corollary}
\newcommand{\bcor}{\begin{corollary}}
\newcommand{\ecor}{\end{corollary}}
\newtheorem{example}[theorem]{Example}
\newcommand{\bex}{\begin{example}}
\newcommand{\eex}{\end{example}}
\newtheorem{lemma}[theorem]{Lemma}
\newcommand{\blemma}{\begin{lemma}}
\newcommand{\elemma}{\end{lemma}}
\newtheorem{remark}[theorem]{Remark}
\newcommand{\bremark}{\begin{remark}}
\newcommand{\eremark}{\end{remark}}
\newtheorem{conj}[theorem]{Conjecture}
\newcommand{\bconj}{\begin{conj}}
\newcommand{\econj}{\end{conj}}



\newcommand{\naturals}{\ensuremath{\mathbb{N}}}
\newcommand{\integers}{\ensuremath{\mathbb{Z}}}


\def\0{{\tt 0}} 
\def\1{{\tt 1}} 
\def\?{{\tt *}} 
 %


\DeclareMathOperator{\w}{w}    
\newcommand{\BPsmall}{\ensuremath{\text{\tiny BP}}} 
 %
 %



\newcommand{\qed}{{\hfill \footnotesize $\blacksquare$}}

\newcommand{\dens}[1]{\mathsf{#1}}
\newcommand{\Ldens}[1]{\dens{#1}}




\newcommand{\dr}{{\mathtt r}}
\newcommand{\dl}{{\mathtt l}}

 %




\newcommand{\vconv}{\circledast}
\newcommand{\cconv}{\boxast}

\allowdisplaybreaks
\begin{document} 
\title{
Threshold Saturation on BMS Channels via Spatial Coupling} \author{\authorblockN{Shrinivas
Kudekar\authorrefmark{1}, Cyril M{\'e}asson\authorrefmark{2}, Tom Richardson\authorrefmark{2} and R{\"u}diger
Urbanke\authorrefmark{3} \\ }
\authorblockA{\authorrefmark{1} New Mexico Consortium  and CNLS, Los Alamos National Laboratory, New Mexico, USA\\ Email: skudekar@lanl.gov} \\
\authorblockA{\authorrefmark{2} Qualcomm, USA\\ Email: \{tjr, measson\}@qualcomm.com} \\
 \authorblockA{\authorrefmark{3}School of
Computer and Communication Sciences, EPFL, Lausanne, Switzerland\\
Email: ruediger.urbanke@epfl.ch}\\
 }

\maketitle
\begin{abstract}
We consider spatially coupled code ensembles. A particular instance
are convolutional LDPC ensembles.  It was recently shown that, for
transmission over the binary erasure channel, this coupling increases
the belief propagation threshold of the ensemble to the maximum
a-priori threshold of the underlying component ensemble.  We report
on empirical evidence which suggest that the same phenomenon also
occurs when transmission takes place over a general binary memoryless
symmetric channel. This is confirmed both by simulations as well as by computing EBP GEXIT
curves and by comparing the empirical BP thresholds of coupled ensembles
to the empirically determined MAP thresholds of the underlying regular
ensembles.

We further consider ways of reducing the rate-loss incurred by such
constructions.  \end{abstract}

\section{Introduction}
It has long been known that convolutional LDPC ensembles, introduced by
Felstr{\"{o}}m and Zigangirov \cite{FeZ99}, have excellent thresholds
when transmitting over general binary-input symmetric-output memoryless
(BMS) channels. The fundamental reason underlying this good performance
was recently discussed in detail in \cite{KRU10} for the case when
transmission takes place over the binary erasure channel (BEC).

In particular, it was shown in \cite{KRU10} that the BP threshold of the
spatially coupled ensemble is essentially equal to the MAP threshold
of the underlying component ensemble. It was also shown that for long
chains the MAP performance of the chain cannot be substantially larger
than the MAP threshold of the component ensemble.  In this sense, the
BP threshold of the chain is increased to its maximal possible value.
This is the reason why we call this phenomena {\em threshold saturation
via spatial coupling}. In a recent paper \cite{LeF10}, Lentmaier and
Fettweis independently formulated the same statement as conjecture. They
attribute the observation of the equality of the two thresholds
to G. Liva.

It is tempting to conjecture that the same phenomenon occurs for
transmission over general BMS channels. We provide some empirical
evidence that this is indeed the case. In particular, we compute EBP
GEXIT curves for transmission over the binary additive white Gaussian
noise (BAWGN) channel.  We show that these curves behave in an identical
fashion to the ones when transmission takes place over the BEC. We also
compute fixed points (FPs) of the spatial configuration and we demonstrate
again empirically that these FPs have properties
identical to the ones in the BEC case.

For a review on the literature on convolutional LDPC ensembles we refer
the reader to \cite{KRU10} and the references therein.  As discussed in
\cite{KRU10}, there are many basic variants of coupled ensembles.  For the
sake of convenience of the reader, we quickly review the ensemble $(\dl,
\dr, L, w)$. This is the ensemble we use throughout the paper.

We assume that the variable nodes are at positions $[-L, L]$, $L
\in \naturals$. At each position there are $M$ variable nodes, $M
\in \naturals$. Conceptually we think of the check nodes to be
located at all integer positions from $[- \infty, \infty]$.  Only
some of these positions actually interact with the variable nodes.
At each position there are $\frac{\dl}{\dr} M$ check nodes. It
remains to describe how the connections are chosen.  We assume that
each of the $\dl$ connections of a variable node at position $i$
is uniformly and independently chosen from the range $[i, \dots,
i+w-1]$, where $w$ is a ``smoothing'' parameter. In the same way,
we assume that each of the $\dr$ connections of a check node at
position $i$ is independently chosen from the range $[i-w+1, \dots,
i]$.

A discussion on the above ensemble and a proof of the following lemma can be found in \cite{KRU10}.
\begin{lemma}[Design Rate]\label{lem:designrate}
The design rate of the ensemble $(\dl, \dr, L, w)$, with $w \leq 2 L$,
is given by
\begin{align*}
R(\dl, \dr, L, w) & = 
(1-\frac{\dl}{\dr}) - \frac{\dl}{\dr} \frac{w+1-2\sum_{i=0}^{w} 
\bigl(\frac{i}{w}\bigr)^{\dr}}{2 L+1}.
\end{align*}
\end{lemma}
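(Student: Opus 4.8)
The plan is to compute the design rate as one minus the ratio of the expected number of \emph{active} check nodes to the number of variable nodes, paying close attention to what happens at the two boundaries of the chain. The variable side is immediate: there are $2L+1$ positions carrying $M$ variables each, hence $(2L+1)M$ variable nodes, and since every socket of a variable at position $i$ chooses a check position in $[i,i+w-1]$, which always lies inside the range $[-L,L+w-1]$ of genuinely interacting check positions, every variable node has full degree $\dl$. Thus no correction arises from the variable side, and the only deviation from the naive estimate $1-\tfrac{\dl}{\dr}\tfrac{2L+w}{2L+1}$ must come from the check side.

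The key observation is that a check node contributes a constraint only if it has degree at least one: a check all of whose sockets fall on empty (nonexistent) variable positions is an all-zero row of the parity-check matrix and must be discarded. Accordingly I would take as the definition
\begin{align*}
R = 1 - \frac{\expectation\bigl[\#\{\text{check nodes of degree} \ge 1\}\bigr]}{(2L+1)M}.
\end{align*}
In the bulk every check node has degree $\dr$, so only the $2(w-1)$ boundary positions can produce isolated check nodes. For a check node at the right boundary, at position $L+s$ with $s\in\{1,\dots,w-1\}$, its $w$ candidate variable positions $[L+s-w+1,L+s]$ contain exactly $s$ empty ones (those exceeding $L$); since each of its $\dr$ sockets independently and uniformly picks one of the $w$ positions, it is isolated with probability $(s/w)^{\dr}$, contributing $\tfrac{\dl}{\dr}M\,(s/w)^{\dr}$ isolated nodes in expectation. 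The left boundary yields the same sum by symmetry.

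Subtracting the isolated nodes from the nominal count of $2L+w$ positions, each carrying $\tfrac{\dl}{\dr}M$ check nodes, gives
\begin{align*}
R = 1 - \frac{\dl}{\dr}\,\frac{(2L+w) - 2\sum_{s=1}^{w-1}(s/w)^{\dr}}{2L+1}.
\end{align*}
It then remains only to rearrange this into the stated form. Using that the $i=0$ term vanishes and the $i=w$ term equals one, one has $\sum_{s=1}^{w-1}(s/w)^{\dr} = \sum_{i=0}^{w}(i/w)^{\dr} - 1$, so the numerator becomes $2L+w+2-2\sum_{i=0}^{w}(i/w)^{\dr}$; splitting off the factor $2L+1$ produces exactly $(1-\tfrac{\dl}{\dr}) - \tfrac{\dl}{\dr}\tfrac{w+1-2\sum_{i=0}^{w}(i/w)^{\dr}}{2L+1}$.

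I expect the main difficulty to be conceptual rather than computational: one must recognize that the entire boundary correction is driven by degree-zero check nodes, rather than by any reduction of variable degree (which does not occur), and justify excluding them from the constraint count. A secondary point requiring care is the hypothesis $w\le 2L$, which guarantees that the bulk is nonempty and that the left and right boundary regions are disjoint, so that the two boundary sums do not overlap and the factor of two is legitimate; the probability computation and the indexing of $s$ should be verified against this regime.
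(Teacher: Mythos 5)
Your proof is correct and takes essentially the same approach as the paper's: the paper defers this lemma's proof to \cite{KRU10}, but its own proof of the analogous rate lemma for the circular ensemble uses exactly your argument --- design rate equals one minus the ratio of constraint-imposing check nodes to variable nodes, where a boundary check at offset $s$ fails to impose a constraint with probability $(s/w)^{\dr}$ because all $\dr$ of its edges land on nonexistent (or frozen) variable positions. Your algebraic reduction to the stated form and your use of the hypothesis $w \le 2L$ are both sound.
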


\section{Review: EBP GEXIT Curves, the Area Theorem, and the Maxwell Construction}
\label{sec:review}
Our aim is to empirically demonstrate that the performance of coupled
ensembles is closely related to that of the underlying ensemble also
in the general case. We limit our discussion to the coupling of regular
ensembles. To get started, let  us briefly review how the BP and
MAP threshold can be characterized for regular ensembles. A detailed discussion can be found
in \cite{RiU08}.

For $\ell \geq 1$, the (forward) density evolution (DE) equation for an
$(\dl, \dr)$-regular ensemble is given by
$$
\Ldens{x}_{\ell} = \Ldens{c} \vconv (\Ldens{x}^{\cconv (\dr-1)}_{\ell-1})^{\vconv (\dl-1)}.
$$
Here, $\Ldens{c}$ is the $L$-density of the BMS channel over which
transmission takes place and $\Ldens{x}_{\ell}$ is the density emitted
by variable nodes in the $\ell$-th round of density evolution. Initially
we have $\Ldens{x}_{0}=\Delta_0$, the delta function at $0$. The operators $\vconv$ and $\cconv$ correspond
to the convolution of densities at variable and check nodes, respectively.
We say that a density $\Ldens{x}$ is a FP of DE for the channel
$\Ldens{c}$ if
$
\Ldens{x} = \Ldens{c} \vconv (\Ldens{x}^{\cconv (\dr-1)})^{\vconv (\dl-1)}.
$
More succinctly, we say that $(\Ldens{c}, \Ldens{x})$ is a FP of DE.

For the BEC it is known that the behavior of the BP as well as the behavior of the MAP
decoder are determined by the collection of FPs of DE. The same
is conjectured to be true for general channels. Let us discuss this general
conjecture.

The key concept in this conjecture is the EBP GEXIT curve \cite{MMRU09}.  This curve
is shown in Figure~\ref{fig:ebpexit36} for the $(3, 6)$-regular ensemble assuming
that transmission takes place over the BAWGN.
Numerically it is constructed
in the following way. To construct one point on this curve find a FP $(\Ldens{c}_{\sigma}, \Ldens{x})$, 
where $\Ldens{c}_{\sigma}$ denotes an element
of the channel family under consideration. E.g., in the case considered
in Figure~\ref{fig:ebpexit36}, $\Ldens{c}_{\sigma}$ represents the $L$-density
of a BAWGN channel of variance $\sigma^2$. This FP gives 
rise to the point $(H(\Ldens{c}_{\sigma}), G(\Ldens{c}_{\sigma}, \Ldens{x}))$ in the GEXIT plot. 
Hereby, 
\begin{align*}
H(\Ldens{a}) & = \int \Ldens{a}(y) \log(1+e^{-y}) dy, \\
G(\Ldens{c}_{\sigma}, \Ldens{a}) & = \int \Ldens{a}(y) l(\sigma, y) dy  \\
l(\sigma, y) & = 
\Bigl(\int \frac{e^{-\frac{(z-2/\sigma^2)^2 \sigma^2}{8}}}{1+e^{z+y}} dz \Bigr)/
\Bigl(\int \frac{e^{-\frac{(z-2/\sigma^2)^2 \sigma^2}{8}}}{1+e^{z}} dz \Bigr). 
\end{align*}
In words, $H( \cdot )$ computes the {\em entropy} associated to an $L$-density,
whereas $G(\Ldens{c}_{\sigma}, \cdot )$ computes the so-called {\em GEXIT value} of
an $L$-density. This GEXIT value depends on the ``operating point'', i.e., it depends
on the underlying channel $\Ldens{c}_{\sigma}$. To first order, the GEXIT
value is equal to the entropy. 

We get the EBP GEXIT curve if we plot the points corresponding
to {\em all} FPs of DE. For a detailed
discussion we refer the reader to \cite{MMRU09,RiU08}.
\begin{figure}[htp] \centering
\input{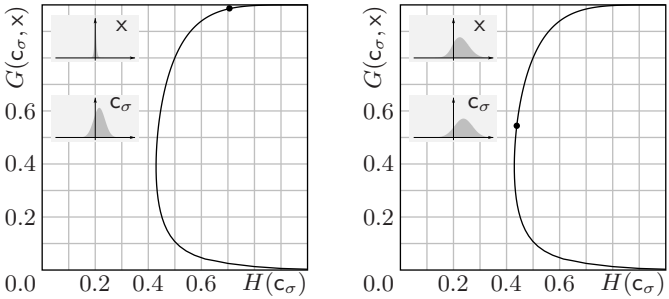} 
\caption{\label{fig:ebpexit36} 
The EBP GEXIT curve for the $(\dl=3,
\dr=6)$-regular ensemble and transmission over the BAWGNC.  Each point on
the curve corresponds to a FP $(\Ldens{c}_{\sigma}, \Ldens{x})$ of DE.  
The two figures show the 
FP density $\Ldens{x}$ 
as well as the input density $\Ldens{c}_{\sigma}$
for two points on the curves (see inlets).} 
\end{figure}
It was shown in \cite{MMRU09} that, under suitable technical conditions,
for every GEXIT value $g \in [0, 1]$ there exists at least one FP $(\Ldens{c}_{\sigma}, \Ldens{x})$ with GEXIT value $g$.  Further, a
simple recursive numerical procedure can be used to find such a FP. 
The technical difficulty lies in establishing the existence of the
{\em curve} (rather than the existence of just the set of FPs).
Although the numerical evidence strongly suggests the existence of the
curve, it is an open problem to prove this analytically.

We can construct an upper bound on the MAP threshold as shown in
Figure~\ref{fig:maxwell}, see \cite{MMRU09}: integrate the EBP GEXIT
curve starting from the point $(1, 1)$ from right to left until the area
under the curve equals the rate of the code. The point on the $x$-axis
where this equality occurs is an upper bound on the MAP threshold. It
is conjectured to be in fact the exact MAP threshold.

\begin{figure}[htp]
\centering
\input{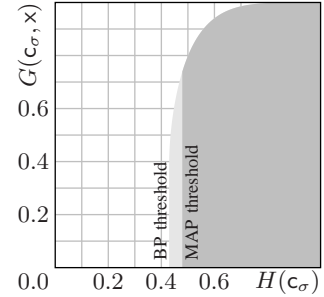}
\caption{\label{fig:maxwell} 
Upper bound on the MAP threshold for the $(3,6)$-regular ensemble and
transmission over the BAWGNC.  This upper bound is given by the entropy
value where the dark gray vertical line hits the $x$-axis.  The Maxwell
conjecture states that this bound is tight. Numerically the upper bound
is at a channel entropy of roughly $0.4792$. For comparison, the BP
threshold is at a channel entropy of roughly $0.4291$.} \end{figure}


\section{Density Evolution, Fixed Points, and the EBP GEXIT Curve for Coupled Ensembles}
\subsection{Density Evolution}
Let us describe the DE equations for the $(\dl, \dr, L, w)$ ensemble.
In the sequel, densities are $L$-densities.  Let $\Ldens{c}$ denote
the channel density and let $\Ldens{x}_i$ denote the density which is
emitted by variable nodes at position $i$.

\begin{definition}[Density Evolution of $(\dl, \dr, L, w)$ Ensemble]
Let $\Ldens{x}_i$, $i\in \integers$, denote the average $L$-density which
is emitted by variable nodes at position $i$. For $i \not \in [-L, L]$
we set $\Ldens{x}_i=\Delta_{+\infty}$. Here, $\Delta_{+\infty}$ is the
delta function at $+\infty$. In words, the boundary variable nodes have
perfect information. For $i \in [-L, L]$, the FP condition implied by
DE is
\begin{align}\label{eq:densevolxi}
\Ldens{x}_i 
& = \Ldens{c} \vconv 
\Bigl(\frac{1}{w} \sum_{j=0}^{w-1} \bigl(\frac{1}{w} \sum_{k=0}^{w-1} 
\Ldens{x}_{i+j-k} \bigr)^{\cconv (\dr-1)} \Bigr)^{\vconv(\dl-1)}.
\end{align}
Define
\begin{align*}
g(\Ldens{x}_{i\!-\!w\!+\!1}, \!\dots\!,  \Ldens{x}_{i\!+\!w\!-\!1}) =  \Bigl(\frac{1}{w} \sum_{j=0}^{w-1} \bigl(\frac{1}{w} \sum_{k=0}^{w-1} 
\Ldens{x}_{i\!+\!j\!-\!k} \bigr)^{\cconv (\dr\!-\!1)} \Bigr)^{\vconv(\dl\!-\!1)}.
\end{align*}
Note that 
$g(\Ldens{x}, \dots, \Ldens{x}) = (\Ldens{x}^{\cconv (\dr-1)})^{\vconv (\dl-1)}$,
where the r-h-s represents DE (without the effect of the
channel) for the underlying $(\dl, \dr)$-regular ensemble.

We write $\Ldens{y} \prec \Ldens{x}$ if $\Ldens{x}$ is {\em degraded}
w.r.t. $\Ldens{y}$.  It is not hard to see \cite{RiU08} that the
function $g(\Ldens{x}_{i-w+1}, \dots, \Ldens{x}_{i})$
 is monotone w.r.t. degradation in all its arguments $\Ldens{x}_j$,
$j=i-w+1, \dots, i$.  More precisely, if we degrade any of the densities
$\Ldens{x}_j$, $j=i-w+1, \dots, i$, then $g(\cdot)$  is also degraded
w.r.t. to its original value. We say that $g(\cdot)$ is {\em
monotone} in its arguments.
\qed 
\end{definition}

\subsection{Fixed Points and Admissible Schedules}
\begin{definition}[FPs of Density Evolution]\label{def:fixedpoints}
Consider DE for the $(\dl, \dr, L, w)$ ensemble.
Let $\x=(\Ldens{x}_{-L}, \dots, \Ldens{x}_{L})$. We call $\x$ the {\em constellation} (of symmetric $L$-densities). 
We say that $\x$ forms a FP
of DE with channel $\Ldens{c}$ if $\x$ fulfills (\ref{eq:densevolxi})
for $i \in [-L, L]$.  As a short hand we then say that $(\Ldens{c}, \x)$
is a FP.  We say that $(\Ldens{c}, \x)$ is a {\em non-trivial}
FP if $\x$ is not identically equal to $\Delta_{+\infty}\,\forall\, i$.
Again, for $i\notin [-L,L]$, $\Ldens{x}_i = \Delta_{+\infty}$.
\qed
\end{definition}

\begin{definition}[Forward DE and Admissible Schedules]\label{def:forwardDE} 
Consider {\em forward} DE for the $(\dl, \dr, L, w)$ ensemble.  More
precisely, pick a channel $\Ldens{c}_{\sigma}$. Initialize 
$\x^{(0)}=(\Delta_0, \dots, \Delta_0)$. Let $\x^{(\ell)}$ be the result of
$\ell$ rounds of DE. More precisely, $\x^{(\ell+1)}$ is generated from
$\x^{(\ell)}$ by applying the DE equation \eqref{eq:densevolxi} to each
section $i\in [-L,L]$,
\begin{align*}
\Ldens{x}_i^{(\ell+1)} & = \Ldens{c}\vconv g(\Ldens{x}_{i-w+1}^{(\ell)},\dots,\Ldens{x}_{i+w-1}^{(\ell)}).
\end{align*}
We call this the {\em parallel} schedule.

More generally, consider a schedule in which in each step $\ell$
an arbitrary subset of the sections is updated, constrained only by
the fact that every section is updated in infinitely many steps. We
call such a schedule {\em admissible}. Again, we call $\x^{(\ell)}$
the resulting sequence of constellations.  \end{definition}

\begin{lemma}[FPs of Forward DE]\label{lem:forwardDE} 
Consider forward DE for the $(\dl, \dr, L, w)$ ensemble.  Let
$\x^{(\ell)}$ denote the sequence of constellations under an admissible
schedule.  Then $\x^{(\ell)}$ converges to a FP of DE, with each component being a symmetric $L$-density and this
FP is independent of the schedule.  In particular, it is equal
to the FP of the parallel schedule.
\end{lemma}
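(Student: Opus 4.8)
The plan is to exploit two structural facts behind \eqref{eq:densevolxi}: that a single DE update preserves the degradation order $\prec$, and that the initialization $\Delta_0$ is the \emph{maximal} element of this order. Concretely, the given monotonicity of $g$ together with the standard fact that the channel convolution $\Ldens{c}\vconv(\cdot)$ preserves degradation (see \cite{RiU08}) shows that the full parallel update $F\colon\x\mapsto\bigl(\Ldens{c}\vconv g(\Ldens{x}_{i-w+1},\dots,\Ldens{x}_{i+w-1})\bigr)_{i}$ is monotone w.r.t.\ $\prec$, and every admissible step updates a subset of the sections of $F$ while leaving the others fixed.

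First, monotone convergence. Since $\x^{(0)}=(\Delta_0,\dots,\Delta_0)$ and every symmetric density is degraded w.r.t.\ $\Delta_0$, we have $\x^{(1)}\prec\x^{(0)}$ componentwise, both for the parallel schedule and for any admissible one. Because $F$ and every partial update are monotone, an induction gives $\x^{(\ell+1)}\prec\x^{(\ell)}$ for all $\ell$, so each section improves monotonically in degradation. A monotone sequence of symmetric $L$-densities converges in the L\'evy metric — equivalently the entropies $H(\Ldens{x}_i^{(\ell)})$ are bounded monotone real sequences and the space of symmetric $L$-densities is compact — so $\x^{(\ell)}$ has a limit $\x^{(\infty)}$. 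Symmetry is preserved at every step ($\Ldens{c},\Delta_0,\Delta_{+\infty}$ are symmetric and $\vconv,\cconv$ preserve symmetry) and passes to the limit. By continuity of $g$ and of $\Ldens{c}\vconv(\cdot)$ in the L\'evy metric, the limit satisfies \eqref{eq:densevolxi} at every section updated infinitely often; by admissibility this is every section, so $\x^{(\infty)}$ is a genuine FP.

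Schedule independence is then a two-sided sandwich in $\prec$. Let $\x^{\text{par}}$ and $\x^{\text{adm}}$ be the two limits. For one inequality I compare the schedules step by step: the parallel schedule updates a superset of the sections updated by the admissible one, so for each section either monotonicity of $F$ applies directly (updated case) or the parallel sequence's own monotone decrease plus transitivity of $\prec$ applies (non-updated case); an induction yields $\x^{(\ell)}_{\text{par}}\prec\x^{(\ell)}_{\text{adm}}$ for all $\ell$, hence $\x^{\text{par}}\prec\x^{\text{adm}}$. For the reverse, I use that $\x^{\text{adm}}$ is a FP, so $F(\x^{\text{adm}})=\x^{\text{adm}}$, while $\x^{\text{adm}}\prec\x^{(0)}_{\text{par}}$ trivially since $\Delta_0$ is maximal; applying $F^{\ell}$ and invoking monotonicity gives $\x^{\text{adm}}=F^{\ell}(\x^{\text{adm}})\prec F^{\ell}(\x^{(0)}_{\text{par}})=\x^{(\ell)}_{\text{par}}$ for every $\ell$, and letting $\ell\to\infty$ yields $\x^{\text{adm}}\prec\x^{\text{par}}$. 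Since $\prec$ is antisymmetric on densities, the two bounds force $\x^{\text{par}}=\x^{\text{adm}}$.

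The conceptual steps — monotonicity and the sandwich — are routine once the order structure is in place; the main obstacle is analytic. The crux is to justify the limiting arguments rigorously: that monotone sequences of symmetric $L$-densities converge in the L\'evy metric, that $g$ and the channel convolution are continuous in this metric so that the limit is a true FP, and that $\prec$ is closed under such limits (this last point is exactly what lets me pass the reverse inequality to the limit). These are the facts I would import from the density-evolution machinery of \cite{RiU08}; with them, monotonicity and the antisymmetry of $\prec$ close the argument.
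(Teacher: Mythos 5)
Your proof is correct, and its first half is essentially the paper's own argument: monotone decrease from the maximal initialization $(\Delta_0,\dots,\Delta_0)$ via monotonicity of the update, sectionwise convergence of a degradation-monotone sequence of symmetric densities (the paper cites Lemma 4.75 of \cite{RiU08} for precisely this), and continuity of the DE operator to identify the limit as a FP. Where you genuinely diverge is schedule independence. The paper argues operationally: for any two admissible schedules the computation trees are nested (for every time $\ell$ under one schedule there is a time $\ell'$ at which the other schedule's computation graph is a superset), so each limit is at least as good as the other, and symmetry of the roles finishes the proof. You instead argue order-theoretically, with a two-sided sandwich: an induction showing $\x^{(\ell)}_{\text{par}}\prec\x^{(\ell)}_{\text{adm}}$ (updated sections by monotonicity, non-updated ones by the parallel sequence's own decrease plus transitivity), and conversely $\x^{\text{adm}}=F^{\ell}(\x^{\text{adm}})\prec F^{\ell}(\x^{(0)}_{\text{par}})=\x^{(\ell)}_{\text{par}}$ since the admissible limit is a FP of the full map and the initialization is maximal; antisymmetry of mutual degradation then forces equality. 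Your route is more self-contained and makes explicit exactly which analytic facts carry the proof — closedness of $\prec$ under L\'evy limits, antisymmetry of degradation, continuity of the update — all importable from \cite{RiU08}; the paper's tree-nesting sketch avoids the two order-theoretic facts but leaves implicit the monotonicity statement it relies on (a larger computation tree cannot hurt DE performance). One slip of phrasing only: you say ``every symmetric density is degraded w.r.t.\ $\Delta_0$,'' whereas with the paper's convention it is $\Delta_0$ that is degraded with respect to every symmetric density — that is exactly what makes it the maximal element $\Ldens{a}\prec\Delta_0$; your subsequent use of this fact is the correct one.
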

\begin{proof}
Consider first the parallel schedule. We claim that the vectors
$\x^{(\ell)}$ are ordered, i.e., $\x^{(0)}\succ \x^{(1)}\succ \dots
\succ \underline{0}$ (the ordering is section-wise and $\underline{0}$ is the vector of $\Delta_{+\infty}$). This is true since
$\x^{(0)}=(\Delta_0, \dots, \Delta_0)$, whereas $\x^{(1)}\prec (\Ldens{c},
\dots, \Ldens{c}) \prec (\Delta_0, \dots, \Delta_0) =\x^{(0)}$. It now
follows by induction on the number of iterations and the monotonicity
of the function $g(\cdot)$, $\forall\,i$, that the sequence
$\x^{(\ell)}$ is monotonically decreasing. More precisely, we have
 $\x^{(\ell+1)}_i \prec \x^{(\ell)}_i$. Hence, from Lemma 4.75 in
\cite{RiU08}, we conclude that each section converges to a limit density
which is also symmetric.
 Call the limit $\x^{\infty}$.  Since the DE equations
are continuous it follows that $\x^{(\infty)}$ is a FP of DE
\eqref{eq:densevolxi} with parameter $\Ldens{c}$. We call $\x^{(\infty)}$
the forward FP of DE.

That the limit (exists in general and that it) does not depend on the
schedule follows by standard arguments and we will be brief.  The idea
is that for any two admissible schedules the corresponding computation
trees are nested. This means that if we look at the computation graph
of schedule lets say 1 at time $\ell$ then there exists a time $\ell'$
so that the computation graph under schedule $2$ is a superset of the
first computation graph. To be able to come to this conclusion we have
crucially used the fact that for an admissible schedule every section is
updated infinitely often. This shows that the performance under schedule
2 is at least as good as the performance under schedule 1.  Since the
roles of the schedules are symmetric, the claim follows.  
\end{proof}

\subsection{The EBP GEXIT Curve for Coupled Ensembles}
We come now to the key point, the computation of the EBP GEXIT curve.
From previous sections we have seen that FPs of forward DE are well defined
and can be computed by applying a parallel schedule. This procedure
allows us to compute {\em stable} FPs. As discussed in Section~\ref{sec:review}, it
was shown in \cite{MMRU09} how to compute unstable FPs for uncoupled ensembles
by a modified DE procedure in which the entropy is kept fixed and the channel
parameter is varied. The same procedure can be applied for coupled ensembles.
\begin{figure}[htp]
\begin{centering}
\input{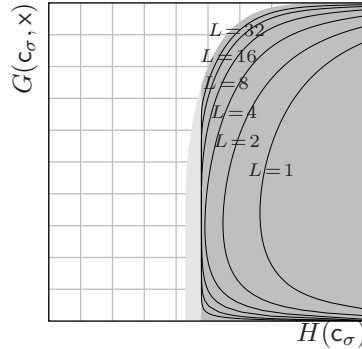}
\caption{EBP GEXIT curves of the ensemble $(\dl=3, \dr=6, L)$ 
for $L=1, 2, 4, 8, 16,$ and $32$ and transmission over the BAWGNC.
The BP thresholds are 
$\epsilon^{\BPsmall}(3, 6, 1)=0.66324$, 
$\epsilon^{\BPsmall}(3, 6, 2)=0.54701$, 
$\epsilon^{\BPsmall}(3, 6, 4)=0.49031$, 
$\epsilon^{\BPsmall}(3, 6, 8)=0.47928$, 
$\epsilon^{\BPsmall}(3, 6, 16)=0.47918$, 
$\epsilon^{\BPsmall}(3, 6, 32)=0.47917$, 
The light/dark gray areas mark the interior of the BP/MAP GEXIT function
of the underlying $(3, 6)$-regular ensemble, respectively.
}
\label{fig:lrLexit}
\end{centering}
\end{figure}
Figure~\ref{fig:lrLexit} shows the result of this numerical computation
when transmission takes place over the BAWGNC.  Note that the resulting
curves look very similar to the curves when transmission takes place
over the BEC, see \cite{KRU10}. For very small values of $L$ the curves are far to the right
due to the significant rate loss that is incurred at the boundary.
For $L$ around $10$ and above, the BP threshold of each ensemble is very
close to the MAP threshold of the underlying $(3, 6)$-regular ensemble,
namely $0.4792$. The picture strongly suggests that the same threshold saturation
effect occurs for general channels as it was shown analytically to hold
for the BEC in \cite{MMRU09}.

\section{A Possible Proof Approach}
So far the current discussion was empirical. Let us now quickly review which
parts of the proof in \cite{KRU10} can be extended easily and which currently seem difficult.\\
(i) {\em Constellation parameter:}
For the BEC, entropy is equal to the Bhattacharyya parameter which is equal to the erasure probability 
and which is also equal to twice the error probability. So, in this case
any of those values is a natural quantity to parametrize constellations.
For general channels, these parameters differ, and their choice is not necessarily equivalent
for the purpose of the proof.\\
(ii) {\em Existence of FP:}
Although we did not explicitly state it in this short paper, the
Existence Theorem~29 of \cite{KRU10}, which guarantees the existence
of a special FP of DE (c.f. Figure~\ref{fig:accordeonfp}) can be
extended to the general case by considering the Battacharyya
functional. The main technical difficulty in the proof arises due
to the fact that we are now operating on a space of symmetric
probability densities. So to extend the proof of the BEC, we need
to define appropriate metrics in this space so that we can apply
the necessary fixed-point theorems. Together with the use of extremes
of information combining methods, see \cite{RiU08}, the proof extends. \\
(iii) {\em Shape of the constellation and the transition
length:} A key ingredient in the proof for the BEC was to show that
any FP of DE has a very particular ``shape.'' More precisely, any FP
had a very ``fast'' transition between its extreme values.  Empirically,
for the general case we observe the same phenomena.  From
Figure~\ref{fig:accordeonfp} we see that the FP quickly saturates
to its maximum value (w.r.t.  physical degradation) of the stable FP
of the $(\dl, \dr)$-regular ensemble. To show this property analytically seems
currently to be one of the key difficulties in extending the proof.\\
(iv) {\em Construction of GEXIT Curve and the Area Theorem:}
Another key part of the BEC proof is the construction of a family
of FPs (not necessarily stable FPs). The GEXIT curve plus the fast
transition makes it possible in the case of the BEC to show that
the ``special'' FP which was constructed via the existence theorem,
has an associated channel parameter very close to the MAP threshold.
How to best construct the GEXIT curve is an open issue.
\begin{figure}[htp]
\begin{centering}
\input{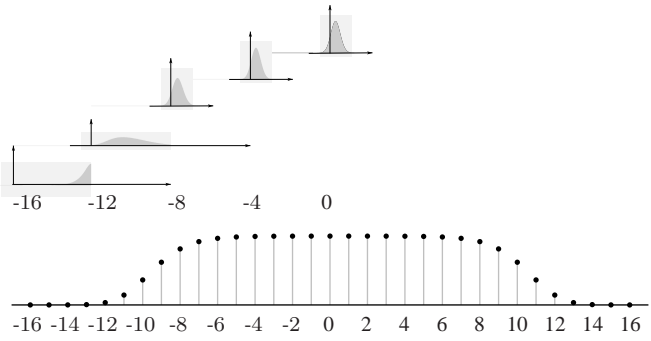}
\caption{Unimodal (special) FP of the $(\dl=3, \dr=6, L=16, w=3)$
ensemble for BAWGNC($\sigma$) with $\sigma=0.9480$ (channel entropy
$\approx 0.4792$).  
The bottom figure plots the entropy of the density at each section.
The values are small at the boundary and essentially constant in the middle. 
The top figure shows the actual densities at sections $0, \text{-}4, \text{-}8, \text{-}12, \text{-}16$. Notice that
the mass shifts towards the right as sections go to $\text{-}16$. Also
plotted in the top figure at section $0$ is the stable FP of DE for  
the  $(3,6)$-regular ensemble at $\sigma=0.9480$. The density is
right on the top of the one at section $0$.} \label{fig:accordeonfp} \end{centering}
\end{figure}

\section{How to Mitigate the Rate Loss} 
We have seen that by coupling ensembles we can increase the
threshold substantially. We also know that, due to the boundary condition,
we incur a rate loss (see Lemma~\ref{lem:designrate}). This rate loss
decreases linearly in $2 L+1$, the length of the chain. Therefore, by
picking $L$ large we can ensure that the rate loss is as small as desired.
But large $L$ implies large codeword lengths and also may require a
large number of iterations in the decoding process. This motivates  
to find ways of mitigating this rate loss.

To keep things simple, we consider transmission over the BEC. The same
techniques and trade-offs apply to general BMS channels, but of course
the given numerical values will change.
We discuss two basic techniques: (i) rather than setting all boundary
variables to be known, it suffices to set to known only a smaller
fraction; (ii) it suffices to start the process at only one boundary
rather than both. In addition there might be a benefit to consider
ensembles defined on a circle rather than a line. This symmetrizes all
positions of the ensemble, which in turn might lead to more efficient
implementations.

\subsection{Circular Ensembles}
Consider a ``circular'' ensemble.  This ensemble is defined in a similar
manner as the $(\dl, \dr, L, w)$ ensemble except that the positions
are now from $0$ to $K-1$ and the index arithmetic is performed modulo
$K$. This circular ensemble has design rate equal to $1-\dl/\dr$.
If we let $K=2L+w$ and if we set $w-1$ consecutive variable positions to $0$
then we recover the ensemble $(\dl, \dr, L, w)$. This in itself gives
a possibly more efficient way of implementing coupled ensembles. In this
implementation all positions are symmetric, except for the received values,
which are modified for the chosen $w-1$ consecutive positions.

Let us now generalize the construction. For $k \in [0, K-1]$ let $\kappa_k
\in [0, 1]$ denote the fraction of variable nodes at position $k$ which
we set to be known.  E.g., if we set $\kappa_0=\kappa_1=\cdots =\kappa_{\w-2}=1$
and all other $\kappa_i$ values to $0$ then we recover the previous case.
Define $\underline{\kappa}=\{\kappa_i\}$.  As we will see shortly, from
a rate perspective, it is not necessarily the best to set all variables
at a certain position to $0$.  Further, it can be better to choose the
``boundary'' positions to be non-consecutive. This is why it is useful to
introduce the above general model.

To start, let us compute
the design rate for the above set-up.  We denote this ensemble by $(\dl,
\dr, K, w, \underline{\kappa})$.
\begin{lemma}[Rate]
The design rate 
$R(\dl, \dr, K, w, \underline{\kappa})$
of the ensemble $(\dl, \dr, K, w, \underline{\kappa})$ is given by
\begin{align*}
1-\frac{\dl}{\dr} -\frac{\dl}{\dr} \frac{ 
\sum_{k=0}^{K-1} [\kappa_k- (\frac1{w} \sum_{j=0}^{w-1}
\kappa_{k-j})^{\dr}]}{\sum_{k=0}^{K-1} (1-\kappa_k)}.
\end{align*}
\end{lemma}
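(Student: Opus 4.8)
The plan is to compute the design rate straight from its definition, namely as one minus the ratio of the number of \emph{non-trivial} parity constraints to the number of \emph{transmitted} (free, i.e.\ unknown) variable nodes, taken in expectation over the ensemble and under the usual design-rate convention that the surviving checks are linearly independent. The circular structure is what makes this clean: with index arithmetic modulo $K$ every position is equivalent, so—unlike in Lemma~\ref{lem:designrate}—there are no boundary corrections, and the only mechanism that moves the rate away from $1-\dl/\dr$ is the shortening induced by $\underline{\kappa}$. The denominator is immediate: at each position $k\in[0,K-1]$ there are $M$ variable nodes of which a fraction $\kappa_k$ is declared known, so the number of free variables is $M\sum_{k=0}^{K-1}(1-\kappa_k)$.

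Next I would count the effective constraints, i.e.\ the check nodes incident to at least one free variable; a check incident only to known variables collapses to the equation $0=0$ and imposes nothing. Fix a check at position $k$. Each of its $\dr$ edges independently selects a position $k-j$ with $j\in\{0,\dots,w-1\}$ uniformly, and the variable it hits at that position is known with probability $\kappa_{k-j}$ (indices mod $K$). Hence a single edge lands on a known variable with probability $\frac1w\sum_{j=0}^{w-1}\kappa_{k-j}$, and the whole check is trivial exactly when all $\dr$ edges do so. By independence of the edge assignments (exactly under placement with replacement, or in the $M\to\infty$ design-rate sense) this probability factorizes to $\bigl(\frac1w\sum_{j=0}^{w-1}\kappa_{k-j}\bigr)^{\dr}$. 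Since there are $\frac{\dl}{\dr}M$ checks per position, the expected number of non-trivial constraints is $\frac{\dl}{\dr}M\sum_{k=0}^{K-1}\bigl(1-(\frac1w\sum_{j=0}^{w-1}\kappa_{k-j})^{\dr}\bigr)$.

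Finally I would form the ratio and rearrange. Starting from $R=1-\frac{\dl}{\dr}\,\frac{\sum_k(1-\bar\kappa_k^{\dr})}{\sum_k(1-\kappa_k)}$ with $\bar\kappa_k:=\frac1w\sum_{j=0}^{w-1}\kappa_{k-j}$, the splitting $1-\bar\kappa_k^{\dr}=(1-\kappa_k)+(\kappa_k-\bar\kappa_k^{\dr})$ peels off the term that sums to the denominator and yields
\begin{align*}
R = 1-\frac{\dl}{\dr}-\frac{\dl}{\dr}\,\frac{\sum_{k=0}^{K-1}\bigl[\kappa_k-\bigl(\frac1w\sum_{j=0}^{w-1}\kappa_{k-j}\bigr)^{\dr}\bigr]}{\sum_{k=0}^{K-1}(1-\kappa_k)},
\end{align*}
which is the claimed expression. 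As a sanity check, setting $\kappa_0=\cdots=\kappa_{w-2}=1$ and all remaining $\kappa_i=0$ with $K=2L+w$ gives denominator $(2L+1)M$ and should reproduce Lemma~\ref{lem:designrate} exactly; indeed the windowed averages then collapse to $\sum_k\bar\kappa_k^{\dr}=2\sum_{i=0}^{w}(i/w)^{\dr}-2$, recovering the correction term there.

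The main obstacle is conceptual rather than computational: correctly identifying \emph{which} objects enter the numerator and the denominator. One must recognize that shortening removes the $\kappa_k M$ known variables from the denominator (they are no longer transmitted) while simultaneously deleting from the numerator precisely those checks whose entire $w$-window neighborhood has been shortened, and that these two effects are exactly what produce the windowed average $\bar\kappa_k$ raised to the power $\dr$. The only remaining care is in justifying the factorization to $\bar\kappa_k^{\dr}$ and in reading every window index modulo $K$; both are routine once the counting is set up correctly.
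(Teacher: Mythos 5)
Your proposal is correct and follows essentially the same argument as the paper: count the free (non-shortened) variables $V=M\sum_k(1-\kappa_k)$, count the effective checks by noting an edge of a check at position $k$ hits a frozen variable with probability $\frac1w\sum_{j=0}^{w-1}\kappa_{k-j}$ so a check is trivial with probability $\bigl(\frac1w\sum_{j=0}^{w-1}\kappa_{k-j}\bigr)^{\dr}$, and form $1-C/V$. Your explicit final rearrangement and the sanity check against Lemma~\ref{lem:designrate} are correct additions, but the core counting is identical to the paper's proof.
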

\begin{proof}
The design rate is equal to $1-C/V$, where $C$ and  
 $V$ are the number check and variable nodes 
which are not fixed a priori to $0$. Let us call those variable nodes ``free.''

Let us start with $V$. There are $M$ variable nodes per section.
A fraction $\kappa_i$ of those is permanently fixed to $0$. Therefore,
the number of free variable nodes is
$V=M \sum_{k=0}^{K-1} (1-\kappa_k)$. 

At each section there are $M \frac{\dl}{\dr}$ check nodes.  A check
node imposes a constraint on the free variable nodes if at least one
of its connections goes to a free variable node.  The probability that
a particular edge of a check node at position $k$ is connected to a
frozen bit is equal to $\frac1{w} \sum_{j=0}^{w-1} \kappa_{k-j}$,
where all index arithmetic is modulo $K$. This implies that the
probability that {\em all} $\dr$ edges of a check node at position $k$
are connected to frozen variables is equal to $(\frac1{w} \sum_{j=0}^{w-1}
\kappa_{k-j})^{\dr}$.  Therefore, $C=M K \frac{\dl}{\dr} (1- \frac1{K}
\sum_{k=0}^{K-1} (\frac1{w} \sum_{j=0}^{w-1}
\kappa_{k-j})^{\dr})$.
\end{proof}

\begin{example}[Contiguous and Uniform Boundary]
Consider the $(\dl=3, \dr=6, K, w, \underline{\kappa})$ ensemble where
we set $\kappa_0=\kappa_1=\dots=\kappa_{w-2}=\kappa$. We set all other
values $\kappa_k$ equal to $0$.
For the choice $\kappa=1$ we know that we can achieve a threshold of
roughly $0.48815$ irrespective of the length of $K$. What happens if we
pick $\kappa$ strictly less than $1$?

Let $\delta$ denote the ``effective'' erasure probability at the boundary.
I.e., $\delta$ denotes the fraction of variables in the boundary which
are free and which were erased during the transmission process. We have
$\delta=(1-\kappa) \epsilon$. What is the threshold $\epsilon^{\BPsmall}$
that can be achieved (for arbitrary large $K$) for a given value of $\delta$?

Figure~\ref{fig:rateloss} shows the plot of $\epsilon^{\BPsmall}$
according to DE for $w=3, 4, 8, 16$ and $32$ as a function of $\delta$.
For e.g. $w=3$, up to $\delta=0.23$ the achievable threshold is still
equal to its maximal value, namely $0.48815$. For higher values of $\delta$
the threshold gracefully decreases.  For $\delta=0.23$ we have $\kappa =
1-0.23/0.48815 = 0.529$. For this value of $\kappa$ the corresponding rate
for $K=25$ is $0.478$. This is considerably larger than $0.4604$, which
is the rate if  $\kappa=1$. Indeed, the rate loss has almost been halved.

\begin{figure}[htp] \centering
\input{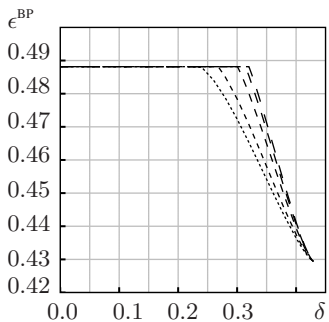} 
\caption{\label{fig:rateloss} 
Behavior of the threshold $\epsilon^{\BPsmall}$ for a uniform and
consecutive boundary condition as a function of the effective erasure
probability $\delta$ at the boundary. The parameters are $w=3, 4,
8, 16$ and $32$.  For $w\!=\!3, 4$ we choose $K\!=\!200$ and for for $w=8,
16, 32$ we chose $K\!=\!400$.  The $x$-axis shows $\delta=(1\!-\!\kappa)
\epsilon^{\BPsmall}$.  The $y$-axis shows the achievable threshold
$\epsilon^{\BPsmall}$.  The curve for $w\!=\!3$ is the left-most curve and the
curve for $w=32$ is the right-most curve.  We have $\epsilon^{\BPsmall}
\!\approx\! 0.48815$ up to $\delta \!\approx\! 0.23$ for $w\!=\!3$, up to $\delta
\!\approx\! 0.267$ for $w\!=\!4$, up to $\delta \approx 0.3$ for $w\!=\!8$, up
to $\delta \approx 0.31$ for $w=16$, and up to $\delta \!\approx\! 0.32$
for $w=32$.  For larger values of $\delta$, $\epsilon^{\BPsmall}$
gracefully decreases to $\epsilon^{\BPsmall}\!\approx\! 0.4294$.} \end{figure}
\end{example}

\begin{example}[Non-Contiguous and Non-Uniform Boundary]
We can do slightly better. Pick $\delta_0=0.22$ and $\delta_2=0.30$. Note
that these two positions are non-contiguous.  For this choice we still get
a threshold of $0.48815$. The corresponding rate for $K=25$ is $0.4806$,
which is slightly higher than in the previous case.  \end{example}

\subsection{One-Sided Ensemble}
An alternative scheme is to define the ensemble on a line but to
employ different terminations at the two boundaries. To be precise.
Let the variable nodes be positioned from $0$ to $K-1$.  Assume the
usual $(\dl, \dr, w)$ case. At the "right" boundary use the following
termination scheme.  At position $K$ there are $M \frac{\dl}{\dr}
\frac{w-1}{2}$ check nodes (instead of the usual $M \frac{\dl}{\dr}$).
Any edge, which under the standard connection rules should connect
to a check node at a position $K$ or larger is mapped to a check node
socket at position $K$. In expectation, exactly $M \dl \frac{w-1}{2}$
such sockets are needed so that all check nodes at position $K$ have
degree exactly $\dr$.  Therefore, locally the right-hand side boundary
behaves exactly like a $(\dl, \dr)$ ensemble and there is no rate loss
associated with this boundary.  

On the left-hand side we proceed as in the standard ensemble. This
reduces the rate-loss by a factor 2 compared to the standard ensemble.
E.g., for $K=25$ we get for our usual $(3, 6)$-ensemble a rate of $0.48$.
But we can do better.  

\begin{example}[One-Sided Termination of $(3, 6)$ Ensemble]
Let $w=3$ and consider an ensemble in which the right boundary is
terminated without rate loss as described above.  Under the standard
scheme, the check nodes at position $0$ have degree $2$ and check nodes
at position $1$ have degree $4$.

Take a fraction $\alpha$ of the check nodes at position $0$ and merge
them with check nodes at position $1$. Those merged nodes at position
$1$ have degree $6$ as in the regular case. As long as $\alpha$ is sufficiently small
the threshold still remains unchanged. But this merging further reduces 
the incurred rate loss.
\end{example}

A small note of caution might be in order at this place. Even though
we can, as we just showed, mitigate the rate loss, this comes at some
price -- the number of required iterations will go up. A characterization
of the involved trade off would be of high practical interest.

\section{Conclusion}
Starting with the work by Felstr\"{o}m and Zigangirov \cite{FeZ99}, it
has been known that coupled ensembles have an excellent performance.
This was confirmed via threshold computations by Sridharan, Lentmaier,
Costello and Zigangirov for the BEC \cite{SLCZ04} and by Lentmaier,
Sridharan, Zigangirov and Costello for general channels \cite{LSZC05}.
In \cite{KRU10} it was shown that for transmission over the BEC 
the BP behavior of coupled ensembles is essentially equal to the
MAP behavior of the underlying ensemble.
 
The current paper provides numerical evidence which suggest that
exactly the same behavior occurs also for transmission over general
BMS channels.  We have further extended some of the basic techniques
and statements which were used in \cite{KRU10} to accomplish the
proof for the BEC to the general setup. A complete proof is
unfortunately still open. As discussed already in \cite{KRU10}, such
a proof would automatically show that it is possible to achieve 
capacity under iterative coding, and in addition, the convergence
to capacity would be uniform over the whole class of BMS channels.

\section{Acknowledgments}
SK acknowledges support of NMC via the NSF collaborative grant CCF-0829945
on ``Harnessing Statistical Physics for Computing and Communications''. SK
would also like to thank Misha Chertkov for his encouragement.

\bibliographystyle{IEEEtran} 
\bibliography{lth,lthpub}
\end{document}